\providecommand{\U}[1]{\protect\rule{.1in}{.1in}}
\newtheorem{theorem}{Theorem} [section]
\newtheorem{conjecture}[theorem]{Conjecture}
\newtheorem{corollary}[theorem]{Corollary}
\newtheorem{lemma}[theorem]{Lemma}
\newtheorem{problem}[theorem]{Problem}
\newtheorem{proposition}[theorem]{Proposition}
\newenvironment{proof}[1][Proof]{\noindent\textbf{#1.} }{\ \rule{0.5em}{0.5em}}
\begin{document}

\title{On $f$-Symmetries of the Independence Polynomial}
\author{Vadim E. Levit\\Department of Computer Science and Mathematics\\Ariel University, Israel\\levitv@ariel.ac.il
\and Eugen Mandrescu\\Department of Computer Science\\Holon Institute of Technology, Israel\\eugen\_m@hit.ac.il}
\date{}
\maketitle

\begin{abstract}
An \textit{independent} set in a graph is a set of pairwise non-adjacent
vertices, and $\alpha(G)$ is the size of a maximum independent set in the
graph $G$.

If $s_{k}$ is the number of independent sets of cardinality $k$ in $G$, then
\[
I(G;x)=s_{0}+s_{1}x+s_{2}x^{2}+...+s_{\alpha}x^{\alpha},\alpha=\alpha\left(
G\right)  ,
\]
is called the \textit{independence polynomial} of $G$ (I. Gutman and F.
Harary, 1983).

If $s_{\alpha-i}=f\left(  i\right)  \cdot s_{i}$ holds for every $i\in\left\{
0,1,...,\left\lfloor \alpha/2\right\rfloor \right\}  $, then $I(G;x)$ is
called $f$-\textit{symmetric (}$f$-\textit{palindromic)}. If $f\left(
i\right)  =1,i\in\left\{  0,1,...,\left\lfloor \alpha/2\right\rfloor \right\}
$, then $I(G;x)$ is called \textit{symmetric (palindromic).}

The \textit{corona} of the graphs $G$ and $H$ is the graph $G\circ H$ obtained
by joining each vertex of $G$ to all the vertices of a copy of $H$.

In this paper we show that if $H$ is a graph with $p$ vertices, $q$ edges, and
$\alpha\left(  H\right)  =2$, then $I\left(  G\circ H;x\right)  $ is
$f$-symmetric, where
\[
f\left(  i\right)  =\left(  \frac{p\left(  p-1\right)  }{2}-q\right)
^{\frac{\alpha}{2}-i},0\leq i\leq\alpha=\alpha\left(  G\circ H\right)  .
\]
In particular, if $H=K_{r}-e,r\geq2$, we show that $I\left(  G\circ
H;x\right)  $ is symmetric and unimodal, with a unique mode. This finding
generalizes results due to Stevanovi\'{c} \cite{St98} and Mandrescu
\cite{Mandrescu2012} claiming that $I\left(  G\circ\left(  K_{2}-e\right)
;x\right)  =I\left(  G\circ2K_{1};x\right)  $ is symmetric and unimodal for
every graph $G$.\smallskip

\textbf{Keywords:} independent set, independence polynomial, symmetric
polynomial, palindromic polynomial.\smallskip

\textbf{MSC 2010 classification:} 05C69, 05C76, 05C31.

\end{abstract}

\section{Introduction}

Throughout this paper $G$ is a simple graph with vertex set $V(G)$ and edge
set $E(G)$. If $X\subset V(G)$, then $G[X]$ is the subgraph of $G$ spanned by
$X$.

By $G-W$ we mean the subgraph $G[V-W]$, if $W\subset V(G)$. We also denote by
$G-F$ the subgraph of $G$ obtained by deleting the edges of $F$, for $F\subset
E(G)$, and we write shortly $G-e$, whenever $F$ $=\{e\}$. 

The \textit{neighborhood} of a vertex $v\in V$ is the set
\begin{align*}
N_{G}(v)  & =\{w:w\in V\text{\textit{ and }}vw\in E\},\text{ and}\\
N_{G}[v]  & =N_{G}(v)\cup\{v\}.
\end{align*}
If there is no ambiguity on $G$, we use $N(v)$ and $N[v]$, respectively. 

$K_{n},P_{n},C_{n}$ denote respectively, the complete graph on $n\geq1$
vertices, the chordless path on $n\geq1$ vertices, and the chordless cycle on
$n\geq3$ vertices.

The \textit{disjoint union} of the graphs $G_{1}$, $G_{2}$ is the graph
$G=G_{1}\cup G_{2}$ having as vertex set the disjoint union of $V(G_{1})$,
$V(G_{2})$, and as edge set the disjoint union of $E(G_{1})$, $E(G_{2})$. In
particular, $nG$ denotes the disjoint union of $n>1$ copies of the graph $G$.
The \textit{Zykov sum} of the disjoint graphs $G_{1}$, $G_{2}$ is the graph
$G_{1}+G_{2}$ with $V(G_{1})\cup V(G_{2})$ as a vertex set and
\[
E(G_{1})\cup E(G_{2})\cup\{v_{1}v_{2}:v_{1}\in V(G_{1}),v_{2}\in V(G_{2})\}
\]
as an edge set. 

The \textit{corona} of the graphs $G$ and $H$ is the graph $G\circ H$ obtained
from $G$ and $\left\vert V\left(  G\right)  \right\vert $ copies of $H$, such
that each vertex of $G$ is joined to all vertices of a copy of $H$.

An \textit{independent} set in $G$ is a set of pairwise non-adjacent vertices.
An independent set of maximum size is a \textit{maximum independent set} of
$G$, and the \textit{independence number }$\alpha(G)$ is the cardinality of a
maximum independent set in $G$.

Let $s_{k}$ be the number of independent sets of size $k$ in a graph $G$. The
polynomial
\[
I(G;x)=s_{0}+s_{1}x+s_{2}x^{2}+...+s_{\alpha}x^{\alpha},\quad\alpha
=\alpha\left(  G\right)  ,
\]
is called the \textit{independence polynomial} of $G$ \cite{GuHa83}. For a
survey on independence polynomials of graphs see \cite{LevManGreece}. Some
basic procedures to compute the independence polynomial of a graph are
recalled in the following.

\begin{theorem}
\cite{GuHa83} \label{th1}\emph{(i)} $I(G_{1}\cup G_{2};x)=I(G_{1};x)\cdot
I(G_{2};x)$;

\emph{(ii)} $I(G_{1}+G_{2};x)=I(G_{1};x)+I(G_{2};x)-1$;

\emph{(iii)} $I(G;x)=I(G-v;x)+x\cdot I(G-N[v];x)$ holds for every $v\in V(G)$.
\end{theorem}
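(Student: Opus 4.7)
The plan is to prove each of the three identities by directly translating a simple structural fact about the relevant graph construction into a coefficient-wise counting identity for the independence polynomial. No induction is needed; each part follows from grouping independent sets according to a natural partition of the vertex set.

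For (i), the key observation is that in $G_{1}\cup G_{2}$ there are no edges between $V(G_{1})$ and $V(G_{2})$, so a set $S$ is independent in $G_{1}\cup G_{2}$ if and only if $S\cap V(G_{i})$ is independent in $G_{i}$ for $i=1,2$. This sets up a bijection between independent $k$-sets of $G_{1}\cup G_{2}$ and pairs $(S_{1},S_{2})$ with $S_{i}$ independent in $G_{i}$ and $|S_{1}|+|S_{2}|=k$, yielding $s_{k}(G_{1}\cup G_{2})=\sum_{j=0}^{k}s_{j}(G_{1})s_{k-j}(G_{2})$, which is exactly the Cauchy product rule for $I(G_{1};x)\,I(G_{2};x)$.

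For (ii), the dual structural point is that in $G_{1}+G_{2}$ every vertex of $G_{1}$ is adjacent to every vertex of $G_{2}$, so any non-empty independent set of $G_{1}+G_{2}$ must lie entirely inside $V(G_{1})$ or entirely inside $V(G_{2})$. Hence $s_{k}(G_{1}+G_{2})=s_{k}(G_{1})+s_{k}(G_{2})$ for every $k\ge 1$, while $s_{0}=1$. Summing the power series and subtracting $1$ to correct for the empty independent set being tallied twice produces $I(G_{1};x)+I(G_{2};x)-1$.

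For (iii), I would fix $v\in V(G)$ and split the independent sets of $G$ into those avoiding $v$ and those containing $v$. The first class is in obvious size-preserving bijection with the independent sets of $G-v$, contributing $I(G-v;x)$. For the second class, the map $S\mapsto S\setminus\{v\}$ is a bijection onto the independent sets of $G-N[v]$ (any neighbor of $v$ is automatically forbidden once $v$ is chosen) that decreases size by one, contributing the term $x\cdot I(G-N[v];x)$. The only piece of bookkeeping requiring any care across the three parts is the off-by-one correction in (ii); beyond that there is no genuine obstacle, as each identity is essentially a direct reformulation of an elementary structural observation about independent sets in $G_{1}\cup G_{2}$, $G_{1}+G_{2}$, and $G$.
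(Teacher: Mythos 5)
Your proofs of all three identities are correct and complete: the Cauchy-product count for the disjoint union, the observation that a non-empty independent set of the Zykov sum lies entirely in one summand (with the subtraction of $1$ correcting the double count of the empty set), and the partition of independent sets by whether they contain $v$ are exactly the standard arguments. The paper itself states Theorem \ref{th1} without proof, citing Gutman and Harary, so there is no in-paper proof to compare against; your write-up supplies the expected justification with no gaps.
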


A finite sequence of real numbers $(a_{0},a_{1},a_{2},...,a_{n})$ is said to be:

\begin{itemize}
\item \textit{unimodal} if there exists an index $k\in\{0,1,...,n\}$, called
the \textit{mode} of the sequence, such that
\[
a_{0}\leq...\leq a_{k-1}\leq a_{k}\geq a_{k+1}\geq...\geq a_{n};
\]

\item $f$\textit{-symmetric} ($f$\textit{-palindromic}) if $a_{n-i}=f\left(
i\right)  \cdot a_{i}$ for all $i\in\left\{  0,...,\left\lfloor
n/2\right\rfloor \right\}  $;

\item \textit{symmetric} (\textit{palindromic}) if $a_{i}=a_{n-i}%
,i=0,1,...,\left\lfloor n/2\right\rfloor $, i.e., $f\left(  i\right)  =1$ for
all $i\in\left\{  0,...,\left\lfloor n/2\right\rfloor \right\}  $.
\end{itemize}

A polynomial is called \textit{unimodal (symmetric, }$f$\textit{-symmetric)}
if the sequence of its coefficients is unimodal (symmetric, $f$-symmetric,
respectively). For instance, the independence polynomial:

\begin{itemize}
\item $I(K_{127}+3K_{7};x)=1+\mathbf{148}x+147x^{2}+\mathbf{343}x^{3}$ is non-unimodal;

\item $I(K_{43}+3K_{7};x)=1+64x+147x^{2}+\mathbf{343}x^{3}$ is unimodal and non-symmetric;

\item $I(K_{18}+3K_{3}+4K_{1};x)=1+\allowbreak31x+\mathbf{33}x^{2}%
+31x^{3}+x^{4}$ is symmetric and unimodal;

\item $I(K_{52}+3K_{4}+4K_{1};x)=1+68x+\mathbf{54}x^{2}+68x^{3}+x^{4}$ is
symmetric and non-unimodal;

\item $I\left(  P_{3}\circ\left(  K_{2}\cup K_{1}\right)  ;x\right)
=1+12x+52x^{2}+105x^{3}+104x^{4}+48x^{5}+8x^{6}$ is $f$-symmetric for
$f(i)=2^{3-i},0\leq i\leq3$.
\end{itemize}

For other examples, see \cite{LeMa03b,LeMa03c,LeMa04c,LeMa04b,Wang,Zfu}.
Alavi, Malde, Schwenk and Erd\"{o}s proved that for every permutation $\pi$ of
$\{1,2,...,\alpha\}$ there is a graph $G$ with $\alpha(G)=\alpha$ such that
$s_{\pi(1)}<s_{\pi(2)}<...<s_{\pi(\alpha)}$ \cite{AlMaScEr87}.

\begin{theorem}
\label{th2}\cite{Gu92d} $I\left(  G\circ H;x\right)  =\left(  I\left(
H;x\right)  \right)  ^{n}\bullet I\left(  G;\frac{x}{I\left(  H;x\right)
}\right)  $, where $n=\left\vert V\left(  G\right)  \right\vert $.
\end{theorem}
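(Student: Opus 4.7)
The plan is a direct combinatorial decomposition of independent sets in $G \circ H$. Label $V(G) = \{v_{1}, \ldots, v_{n}\}$ and denote by $H_{i}$ the $i$-th attached copy of $H$. By definition of the corona, $v_{i}$ is adjacent to every vertex of $H_{i}$, the copies $H_{1}, \ldots, H_{n}$ are pairwise non-adjacent, and no $v_{j}$ with $j \neq i$ is adjacent to anything in $H_{i}$.

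First I would partition an arbitrary independent set $S \subseteq V(G \circ H)$ as $S = S_{G} \cup T_{1} \cup \cdots \cup T_{n}$ with $S_{G} = S \cap V(G)$ and $T_{i} = S \cap V(H_{i})$. Independence of $S$ is equivalent to three conditions: $S_{G}$ is independent in $G$, each $T_{i}$ is independent in $H_{i}$, and $T_{i} = \emptyset$ whenever $v_{i} \in S_{G}$ (since $v_{i}$ dominates $H_{i}$). This last constraint is the mechanism that produces the substitution $x \mapsto x / I(H;x)$.

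Next I would regroup $I(G \circ H; x) = \sum_{S} x^{|S|}$ by summing first over the choice of $S_{G}$. For fixed $S_{G}$ with $|S_{G}| = k$, the $k$ indices $i$ with $v_{i} \in S_{G}$ force $T_{i} = \emptyset$, while for each of the remaining $n - k$ indices the set $T_{i}$ ranges freely over independent sets of $H$. Hence those copies contribute the generating-function factor $(I(H;x))^{n-k}$, and the total contribution of $S_{G}$ is $x^{k} (I(H;x))^{n-k}$. Summing over all independent $S_{G}$ yields
\[
I(G \circ H; x) = \sum_{k} s_{k}(G)\, x^{k} (I(H;x))^{n-k} = (I(H;x))^{n} \cdot I\!\left(G; \frac{x}{I(H;x)}\right),
\]
after factoring out $(I(H;x))^{n}$ and recognising $\sum_{k} s_{k}(G) y^{k} = I(G;y)$ with $y = x/I(H;x)$.

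There is no real obstacle beyond bookkeeping; the only subtlety worth emphasising is that the right-hand side is interpreted as a rational expression in $x$ whose denominator is cleared by the prefactor, so that the identity in fact holds as polynomials. Alternatively one could argue by induction on $n$ using part \emph{(iii)} of Theorem \ref{th1} applied at a vertex of $G$, but the combinatorial route above is cleaner and makes the role of the forbidden overlap between $S_{G}$ and the copies $H_{i}$ transparent.
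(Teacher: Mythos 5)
Your combinatorial argument is correct and complete: the decomposition $S=S_{G}\cup T_{1}\cup\cdots\cup T_{n}$, the observation that independence of $S$ is equivalent to independence of $S_{G}$ in $G$, independence of each $T_{i}$ in $H_{i}$, and $T_{i}=\emptyset$ whenever $v_{i}\in S_{G}$, and the resulting identity $I(G\circ H;x)=\sum_{k}s_{k}(G)\,x^{k}\left(I(H;x)\right)^{n-k}$ all check out, and the final factoring is legitimate since $I(H;x)$ has constant term $1$. Note, however, that the paper offers no proof of this statement at all; it is quoted as a known result of Gutman \cite{Gu92d}, so there is no in-paper argument to compare yours against. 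Your write-up would serve as a self-contained proof of the cited formula, and your closing remark about clearing the denominator correctly addresses the one point that needs care when interpreting the right-hand side.
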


The symmetry of matching polynomial and characteristic polynomial of a graph
were examined in \cite{Kennedy}, while for independence polynomial we quote
\cite{Gu93,St98,LevMan2007,LevMan2008}. It is known that the product of two
unimodal polynomials is not necessarily unimodal.

\begin{theorem}
\label{th4} \cite{Andrews} If $P$ and $Q$ are both unimodal and symmetric,
then $P\cdot Q$ is unimodal and symmetric.
\end{theorem}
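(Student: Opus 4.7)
The plan is to establish the symmetry and unimodality of $P\cdot Q$ separately, exploiting a decomposition of each palindromic unimodal factor into ``rectangle'' palindromes.

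For the symmetry half, I would invoke the standard characterization: a polynomial $R$ of degree $d$ is palindromic iff $x^{d} R(1/x) = R(x)$. If $P$ and $Q$ are palindromic of degrees $m$ and $n$, then
\[
x^{m+n} (PQ)(1/x) = \bigl(x^{m} P(1/x)\bigr)\bigl(x^{n} Q(1/x)\bigr) = P(x)\,Q(x),
\]
so $P\cdot Q$ is palindromic of degree $m+n$. This step is essentially cost-free and does not use unimodality.

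The substantive content is unimodality. My key observation is that every palindromic unimodal polynomial $P(x) = \sum_{i=0}^{m} p_{i} x^{i}$ with non-negative coefficients admits a non-negative ``telescoping'' decomposition
\[
P(x) = \sum_{k=0}^{\lfloor m/2\rfloor} c_{k}\, x^{k}\bigl(1 + x + \cdots + x^{m-2k}\bigr), \qquad c_{k} := p_{k} - p_{k-1} \ge 0,
\]
with $p_{-1}:=0$; verification is immediate by extracting the coefficient of $x^{j}$ for $j \leq m/2$ and then invoking palindromicity for $j > m/2$. Writing the analogous decomposition for $Q$ and distributing, $P\cdot Q$ becomes a non-negative combination of products of the form $x^{k+\ell}(1+\cdots+x^{m-2k})(1+\cdots+x^{n-2\ell})$. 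Each rectangle $x^{k}(1+\cdots+x^{m-2k})$ is palindromic about $m/2$, and each rectangle from $Q$ about $n/2$, so every such summand is palindromic about the common point $(m+n)/2$.

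To finish, I would verify that each individual rectangle product is unimodal. Up to the $x^{k+\ell}$ shift, it is the convolution of two uniform windows, whose coefficient sequence is the classical trapezoidal distribution: it rises linearly, plateaus, then falls linearly. Hence $P\cdot Q$ is a non-negative linear combination of palindromic unimodal polynomials that all share the mode $(m+n)/2$, and since such combinations trivially preserve both palindromicity and unimodality termwise, the conclusion follows. The main obstacle is the rectangle-product computation: it is elementary, but the trapezoidal shape must be nailed down explicitly before invoking the combination principle, as unimodality is generally \emph{not} preserved under multiplication without the symmetry hypothesis.
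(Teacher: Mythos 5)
Your proof is correct, but there is nothing in the paper to compare it against: Theorem \ref{th4} is quoted from Andrews' book and the paper supplies no proof of its own. What you have written is essentially the classical argument (the decomposition into ``symmetric strings''), and all the pieces check out. The telescoping identity holds because for $j\leq m/2$ the rectangle indexed by $k$ contributes to $x^{j}$ exactly when $k\leq j$, so the coefficient is $\sum_{k\leq j}(p_{k}-p_{k-1})=p_{j}$, and the case $j>m/2$ follows since every rectangle $x^{k}(1+\cdots+x^{m-2k})$ is centered at $m/2$. The non-negativity $c_{k}\geq 0$ for $1\leq k\leq\lfloor m/2\rfloor$ is a genuine consequence of symmetry plus unimodality (if $p_{i}>p_{i+1}$ with $i+1\leq m/2$, unimodality forces $p_{i}\geq p_{i+1}\geq\cdots\geq p_{m-i}=p_{i}$, a contradiction), while $c_{0}=p_{0}\geq0$ uses the non-negativity of the coefficients --- a hypothesis you rightly make explicit even though the paper's statement omits it; the theorem is false without it, e.g.\ $P=Q=-1+x-x^{2}$ gives $PQ=1-2x+3x^{2}-2x^{3}+x^{4}$, which is not unimodal. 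The rectangle product has coefficient of $x^{j}$ equal to $\min(j,a,b,a+b-j)+1$ for $(1+\cdots+x^{a})(1+\cdots+x^{b})$, which is the trapezoid you describe, and every shifted summand is centered at $k+\ell+(m+n-2k-2\ell)/2=(m+n)/2$. The final combination principle is sound precisely because all summands share the same center: each is non-decreasing up to $\lfloor(m+n)/2\rfloor$ and symmetric, two properties that are closed under non-negative linear combinations, whereas unimodality alone is not. So the proposal is a complete and correct proof of a statement the paper takes on faith.
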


However, the above result can not be generalized to the case when $P$ is
unimodal and symmetric, while $Q$ is unimodal and non-symmetric; e.g.,
\begin{align*}
P  &  =1+x+3x^{2}+x^{3}+x^{4},Q=1+x+x^{2}+x^{3}+2x^{4}\text{, while}\\
P\cdot Q  &  =1+2x+5x^{2}+6x^{3}+8x^{4}+7x^{5}+8x^{6}+3x^{7}+2x^{8}.
\end{align*}

It is worth mentioning that one can produce graphs with symmetric independence
polynomials in different ways (see, for instance, \cite{BS2010,Gu92c,St98}).

In this paper we prove that if $H$ is a graph with $p$ vertices, $q$ edges,
and $\alpha\left(  H\right)  =2$, then $I(G\circ H;x)$ is $f$-symmetric, where
$f\left(  i\right)  =\left(  \frac{p\left(  p-1\right)  }{2}-q\right)
^{\frac{\alpha}{2}-i},0\leq i\leq\alpha=\alpha\left(  G\circ H\right)  $. In
particular, if $H=K_{r}-e$, where $r\geq2$ and $e$ is an edge of $K_{r}$, then
$I\left(  G\circ H;x\right)  $ is symmetric and unimodal with a unique mode.
As a consequence, we deduce that $I\left(  G\circ\left(  K_{2}-e\right)
;x\right)  =I\left(  G\circ2K_{1};x\right)  $ is both symmetric \cite{St98}
and unimodal \cite{Mandrescu2012} for every graph $G$.

\section{Results}

The polynomial $P(x)$ is symmetric if and only if it equals its reciprocal,
i.e.,
\[
P(x)=x^{\deg(P)}\cdot P\left(  \frac{1}{x}\right)  .
\]
We generalize this observation using $\frac{1}{cx}$ instead of $\frac{1}{x}$.

\begin{lemma}
\label{Lemma}If $P\left(  x\right)  =\sum\limits_{i=0}^{n}a_{i}x^{i}$ is a
polynomial of degree $n$, then
\[
P(x)=c^{\frac{n}{2}}\cdot x^{n}\cdot P\left(  \frac{1}{c\cdot x}\right)
\ \text{if\ and\ only\ if}\ a_{n-i}=c^{\frac{n}{2}-i}\cdot a_{i},0\leq i\leq
n.
\]

\end{lemma}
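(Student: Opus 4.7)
The plan is essentially a direct substitution and coefficient comparison. The two sides of the claimed functional equation are both polynomials in $x$ of degree $n$, so it suffices to expand one of them explicitly and match coefficients with the other.

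First I would substitute $P(y)=\sum_{i=0}^{n}a_{i}y^{i}$ with $y=1/(cx)$, obtaining
\[
P\!\left(\tfrac{1}{cx}\right)=\sum_{i=0}^{n}\frac{a_{i}}{c^{i}x^{i}}.
\]
Multiplying through by $c^{n/2}x^{n}$ gives
\[
c^{n/2}\cdot x^{n}\cdot P\!\left(\tfrac{1}{cx}\right)=\sum_{i=0}^{n}a_{i}\,c^{\,n/2-i}\,x^{\,n-i}.
\]
Reindexing with $j=n-i$ turns this into $\sum_{j=0}^{n}a_{n-j}\,c^{\,j-n/2}\,x^{j}$, which is now a polynomial written in the same form as $P(x)=\sum_{j=0}^{n}a_{j}x^{j}$.

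Next, I would invoke the fact that two polynomials are equal if and only if all their coefficients agree. Matching the coefficient of $x^{j}$ on both sides yields
\[
a_{j}=a_{n-j}\,c^{\,j-n/2}\quad\text{for every }0\leq j\leq n,
\]
which, solved for $a_{n-j}$, is exactly $a_{n-j}=c^{\,n/2-j}a_{j}$. Since every step is reversible, we get both implications in one shot, and it also suffices to verify the indexed relation only for $0\leq j\leq \lfloor n/2\rfloor$, because the relation for index $j$ automatically forces the relation for index $n-j$.

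There is no genuine obstacle here; the only thing to be a bit careful about is the reindexing $j=n-i$ and the bookkeeping of the exponent $c^{n/2-i}$, especially when $n$ is odd and $c^{n/2}$ is to be interpreted as a formal symbol (the identity still holds as a polynomial identity once a fixed square root of $c$ is chosen). This lemma will then be applied to $P(x)=I(G\circ H;x)$ with $c=\tfrac{p(p-1)}{2}-q$ to prove the $f$-symmetry claim of the main theorem.
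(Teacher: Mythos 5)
Your proof is correct and is essentially identical to the paper's own argument: both expand $c^{n/2}x^{n}P\bigl(\frac{1}{cx}\bigr)$ as $\sum_{i}c^{n/2-i}a_{i}x^{n-i}$, reindex, and compare coefficients, with every step reversible. Your side remarks (reducing to $0\leq i\leq\lfloor n/2\rfloor$ and the interpretation of $c^{n/2}$ for odd $n$) are harmless additions the paper leaves implicit.
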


\begin{proof}
Since
\[
c^{\frac{n}{2}}\cdot x^{n}\cdot P\left(  \frac{1}{cx}\right)  =c^{\frac{n}{2}%
}\cdot x^{n}\cdot\sum\limits_{i=0}^{n}\frac{a_{i}}{\left(  cx\right)  ^{i}%
}=\sum\limits_{i=0}^{n}c^{\frac{n}{2}-i}\cdot a_{i}\cdot x^{n-i}%
=\sum\limits_{i=0}^{n}c^{i-\frac{n}{2}}\cdot a_{n-i}\cdot x^{i},
\]
we infer that
\[
P(x)=c^{\frac{n}{2}}\cdot x^{n}\cdot P\left(  \frac{1}{cx}\right)
\Leftrightarrow a_{i}=c^{i-\frac{n}{2}}\cdot a_{n-i}\Leftrightarrow
a_{n-i}=c^{\frac{n}{2}-i}\cdot a_{i},0\leq i\leq n,
\]
and this completes the proof.
\end{proof}

\bigskip If $\frac{a\left(  x\right)  }{b\left(  x\right)  }=\frac{a\left(
f\left(  x\right)  \right)  }{b\left(  f\left(  x\right)  \right)  }$, then
$f\left(  x\right)  $ is an \textit{invariant} of the rational function
$\frac{a\left(  x\right)  }{b\left(  x\right)  }$. Actually, from the point of
view of Gutman's formula from Theorem \ref{th2}, we are mostly interested in
finding invariants for rational functions of the form $\frac{x}{b\left(
x\right)  }$.

\begin{lemma}
\label{lem3}The rational function $\frac{x}{b_{0}+b_{1}x+b_{2}x^{2}}$ admits
only two invariants, namely, $f_{1}\left(  x\right)  =x$ and $f_{2}\left(
x\right)  =\frac{b_{0}}{b_{2}x}$. \ 
\end{lemma}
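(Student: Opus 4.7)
The plan is to work directly from the definition of invariance. I would assume $f(x)$ is an invariant, which by definition means
\[
\frac{x}{b_{0}+b_{1}x+b_{2}x^{2}} = \frac{f(x)}{b_{0}+b_{1}f(x)+b_{2}f(x)^{2}},
\]
and then clear denominators by cross-multiplying. The resulting polynomial identity
\[
x\bigl(b_{0}+b_{1}f+b_{2}f^{2}\bigr) = f\bigl(b_{0}+b_{1}x+b_{2}x^{2}\bigr)
\]
has a middle term $b_{1}xf$ appearing on both sides, so it cancels. What remains is
\[
b_{0}x + b_{2}xf^{2} = b_{0}f + b_{2}x^{2}f.
\]

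The key observation is then to rearrange this as $b_{2}xf(f-x) - b_{0}(f-x) = 0$, which factors as
\[
(f-x)\bigl(b_{2}xf - b_{0}\bigr) = 0.
\]
From this factorization, either $f(x) = x$ or $f(x) = \frac{b_{0}}{b_{2}x}$, giving exactly the two invariants claimed. One should also verify directly that both candidates do satisfy the invariance identity (which for $f_{2}$ amounts to substituting $\frac{b_{0}}{b_{2}x}$ into the denominator and simplifying), so that the two values are genuine invariants rather than extraneous roots.

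I do not expect any real obstacle here: the computation is routine algebra, and the only mildly non-obvious step is spotting the common factor $(f-x)$ after the $b_{1}$-terms cancel. The statement that there are \emph{only} these two invariants follows from the fact that the quadratic-in-$f$ equation obtained after cross-multiplication has at most two solutions, both of which we have exhibited in closed form.
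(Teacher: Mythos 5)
Your proposal is correct and follows essentially the same route as the paper: cross-multiply, note the cancellation of the $b_{1}xf$ term, and solve the resulting quadratic in $f$, whose two roots are $x$ and $\frac{b_{0}}{b_{2}x}$. The only cosmetic difference is that you exhibit the factorization $\left(  f-x\right)  \left(  b_{2}xf-b_{0}\right)  =0$ explicitly, whereas the paper simply states the two solutions of the quadratic $b_{2}xf^{2}-\left(  b_{0}+b_{2}x^{2}\right)  f+b_{0}x=0$.
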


\begin{proof}
Since%
\[
\frac{x}{b_{0}+b_{1}x+b_{2}x^{2}}=\frac{f\left(  x\right)  }{b_{0}%
+b_{1}f\left(  x\right)  +b_{2}f\left(  x\right)  ^{2}}\Leftrightarrow
b_{2}xf\left(  x\right)  ^{2}-\left(  b_{0}+b_{2}x^{2}\right)  f\left(
x\right)  +b_{0}x=0,
\]
we get the following solutions: $f_{1}\left(  x\right)  =x$ and $f_{2}\left(
x\right)  =\frac{b_{0}}{b_{2}x}$.
\end{proof}

\begin{theorem}
\label{th3}Let $G$ be a graph of order $n$, and $H$ be a graph of order $p$
and size $q$, with $\alpha\left(  H\right)  =2$. Then the polynomial $I(G\circ
H;x)$ is:

\emph{(i)} $f$-symmetric, where $f\left(  i\right)  =\left(  \frac{p\left(
p-1\right)  }{2}-q\right)  ^{\frac{\alpha}{2}-i},0\leq i\leq\alpha
=\alpha\left(  G\circ H\right)  $;

\emph{(ii)} symmetric if and only if $H=K_{r}-e$ for some $r\geq2$.
\end{theorem}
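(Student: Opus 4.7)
The plan is to combine Gutman's product formula (Theorem \ref{th2}) with the invariance observed in Lemma \ref{lem3}, and then read off the coefficient relation from Lemma \ref{Lemma}. First I would record the combinatorial data: since $\alpha(H)=2$ and distinct copies of $H$ in $G\circ H$ are pairwise non-adjacent, a maximum independent set is obtained by picking a maximum independent set in each copy of $H$, whence $\alpha:=\alpha(G\circ H)=2n$. Moreover, $I(H;x)=1+px+cx^{2}$ with $c:=\binom{p}{2}-q$ counting the non-adjacent pairs of $H$, and Theorem \ref{th2} gives
\[
I(G\circ H;x)=I(H;x)^{n}\cdot I\!\left(G;\tfrac{x}{I(H;x)}\right).
\]

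The decisive step is to substitute $x\mapsto 1/(cx)$ into this identity. A short direct computation shows that $I(H;1/(cx))=I(H;x)/(cx^{2})$, and Lemma \ref{lem3} (with $b_{0}=1$, $b_{2}=c$) guarantees the invariance $\frac{1/(cx)}{I(H;1/(cx))}=\frac{x}{I(H;x)}$. Plugging $1/(cx)$ into the formula above, the first identity collapses the outer factor while the second leaves the inner $I(G;\cdot)$ untouched, producing
\[
I\!\left(G\circ H;\tfrac{1}{cx}\right)=\frac{I(G\circ H;x)}{c^{n}\,x^{2n}}.
\]
Rearranged as $I(G\circ H;x)=c^{\alpha/2}\,x^{\alpha}\,I(G\circ H;1/(cx))$, this is precisely the hypothesis of Lemma \ref{Lemma} applied to $P=I(G\circ H;x)$ with degree $\alpha$ and constant $c$, which immediately yields $s_{\alpha-i}=c^{\alpha/2-i}s_{i}$ for all $0\le i\le\alpha$. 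That is claim (i).

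For (ii), $I(G\circ H;x)$ is symmetric exactly when $f(i)=1$ for every $i\in\{0,\dots,\alpha/2\}$. Taking $i=0$ forces $c^{\alpha/2}=1$, and since $\alpha(H)=2$ ensures $c\ge 1$ and $\alpha/2=n\ge 1$, this reduces to $c=1$. The equation $\binom{p}{2}-q=1$ says $H$ has exactly one non-adjacent pair, i.e., $H=K_{r}-e$ for some $r=p\ge 2$; conversely every such $H$ satisfies $c=1$ and $\alpha(H)=2$. The main obstacle in the whole argument is just the clean verification of the two identities $I(H;1/(cx))=I(H;x)/(cx^{2})$ and $\frac{1/(cx)}{I(H;1/(cx))}=\frac{x}{I(H;x)}$, together with the neat cancellation they trigger in Gutman's formula; everything else is routine bookkeeping on top of the preceding two lemmas.
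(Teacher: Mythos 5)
Your proposal is correct and follows essentially the same route as the paper: both use the invariant $x\mapsto 1/(cx)$ from Lemma \ref{lem3} inside Gutman's corona formula, derive $I(G\circ H;x)=c^{n}x^{2n}I(G\circ H;1/(cx))$, and conclude via Lemma \ref{Lemma}. The only (welcome) extra detail is your explicit remark that $c\geq 1$ when deducing $c=1$ in part (ii).
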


\begin{proof}
\emph{(i)} Since $\alpha\left(  H\right)  =2$, one can write $I\left(
H;x\right)  =1+px+mx^{2}$, where $m=\frac{p\left(  p-1\right)  }{2}-q$. By
Lemma \ref{lem3}, the function $g(x)=\left(  mx\right)  ^{-1}$ is the only
non-trivial invariant of $\frac{x}{I\left(  H;x\right)  }$. Thus we get
\[
I\left(  H;g\left(  x\right)  \right)  =1+p\cdot g\left(  x\right)
+m\cdot\left(  g\left(  x\right)  \right)  ^{2}=\left(  mx^{2}\right)
^{-1}\cdot I(H;x).
\]
According to Lemma \ref{lem3}, it follows
\begin{gather*}
I\left(  G\circ H;g(x)\right)  =\left(  I(H;g\left(  x\right)  )\right)
^{n}\cdot I\left(  G;\frac{g\left(  x\right)  }{I(H;g\left(  x\right)
)}\right)  =\\
=\left(  mx^{2}\right)  ^{-n}\cdot\left(  I(H;x)\right)  ^{n}\cdot I\left(
G;\frac{x}{I(H;x)}\right)  =\left(  mx^{2}\right)  ^{-n}\cdot I\left(  G\circ
H;x\right)  .
\end{gather*}
Consequently, we have $I\left(  G\circ H;x\right)  =m^{n}\cdot x^{2n}\cdot
I\left(  G\circ H;\frac{1}{mx}\right)  $. Since every $v\in V\left(  G\right)
$ is joined in $G\circ H$ to a copy of $H$, it follows that each independent
set $S$ of $G\circ H$ containing a pair of non-adjacent vertices from every
copy of $H$ is a maximum independent in $G\circ H$, i.e., $\alpha\left(
G\circ H\right)  =2n$. Lemma \ref{Lemma} implies that $I(G\circ H;x)$ is
$f$-symmetric, where
\[
f\left(  i\right)  =\left(  \frac{p\left(  p-1\right)  }{2}-q\right)
^{n-i},0\leq i\leq2n=\deg(I\left(  G\circ H;x\right)  ).
\]

\emph{(ii)} The polynomial $I\left(  G\circ H;x\right)  $ is symmetric if and
only if $f\left(  i\right)  =1$, $0\leq i\leq\alpha$. By part \emph{(i)}, it
means $\frac{p\left(  p-1\right)  }{2}-q=1$, and this holds if and only if
$H=K_{p}-e$, where $e$ is some edge of $K_{p}$.
\end{proof}

It is worth noticing that $K_{2}-e=2K_{1}$, and this leads to the following.

\begin{corollary}
\cite{St98} The polynomial $I\left(  G\circ2K_{1};x\right)  $ is symmetric for
every graph G.
\end{corollary}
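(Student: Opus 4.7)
The plan is to observe that the corollary is an immediate specialization of Theorem \ref{th3}, obtained by choosing $H = 2K_{1}$ and recognizing it as an instance of $K_{r}-e$.

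First I would verify the identification $K_{2}-e = 2K_{1}$: the complete graph $K_{2}$ has a single edge, and deleting it leaves two isolated vertices, which is precisely $2K_{1}$. In particular, $H = 2K_{1}$ fits the hypothesis $H = K_{r}-e$ with $r = 2 \geq 2$, so Theorem \ref{th3}\emph{(ii)} applies directly and yields that $I(G \circ 2K_{1}; x)$ is symmetric.

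As a sanity check I would also confirm this via Theorem \ref{th3}\emph{(i)}. For $H = 2K_{1}$, one has $p = 2$, $q = 0$, and $\alpha(H) = 2$, so
\[
\frac{p(p-1)}{2} - q = \frac{2 \cdot 1}{2} - 0 = 1,
\]
hence $f(i) = 1^{\alpha/2 - i} = 1$ for all admissible $i$. By definition, $f$-symmetric with $f \equiv 1$ is the same as symmetric, confirming the conclusion.

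There is no real obstacle here — the corollary is purely a reading of the $r = 2$ case of Theorem \ref{th3}\emph{(ii)}, so the proof amounts to one line. The only thing to be careful about is making explicit the elementary identity $K_{2}-e = 2K_{1}$, which is exactly the remark the authors make immediately before stating the corollary.
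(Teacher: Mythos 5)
Your proposal is correct and matches the paper's own derivation exactly: the authors obtain the corollary from Theorem \ref{th3}\emph{(ii)} via the same observation that $K_{2}-e=2K_{1}$, which they state immediately before the corollary. The extra sanity check through part \emph{(i)} with $p=2$, $q=0$ is consistent but not needed.
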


Recall that a graph $G$ is \textit{perfect} if $\chi(H)=\omega(H)$ for every
induced subgraph $H$ of $G$, where $\chi(H)$ denotes the chromatic number of
$H$ \cite{Berge0}.

\begin{proposition}
\cite{LevMan2007b}\label{prop1} If $G$ is a perfect graph with $\alpha
(G)=\alpha$ and $\omega=\omega(G)$, then
\[
s_{\left\lceil (\omega\alpha-1)/\left(  \omega+1\right)  \right\rceil }%
\geq.....\geq s_{\alpha-1}\geq s_{\alpha}.
\]

\end{proposition}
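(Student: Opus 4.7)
The plan is to combine the Lov\'asz perfect graph theorem with a double-counting argument on containments between $k$-independent sets and $(k+1)$-independent sets of $G$.

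First, I would invoke the fact that the complement $\overline{G}$ of a perfect graph is again perfect, so $\chi(\overline{G})=\omega(\overline{G})=\alpha(G)=\alpha$. A proper $\alpha$-coloring of $\overline{G}$ translates into a partition $V(G)=Q_{1}\cup\cdots\cup Q_{\alpha}$ of $V(G)$ into $\alpha$ cliques, each of size at most $\omega$ by definition of $\omega(G)$.

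Second, fix a $k$-independent set $S'$. Being independent, $S'$ hits exactly $k$ of the $Q_{i}$'s, one vertex in each; any other vertex of one of those $k$ cliques is adjacent to the representative of $S'$ there, hence cannot extend $S'$. Thus every $(k+1)$-independent set $S\supset S'$ adds a vertex from the union of the remaining $\alpha-k$ cliques, giving at most $\omega(\alpha-k)$ choices. Double-counting pairs $(S',S)$ with $|S'|=k$, $|S|=k+1$, $S'\subset S$, and both $S',S$ independent, by summing first over $S$ and then over $S'$, yields
\[
(k+1)\,s_{k+1}\;\le\;\omega(\alpha-k)\,s_{k},
\]
so $s_{k+1}\le s_{k}$ precisely when $\omega(\alpha-k)\le k+1$, i.e.\ when $k\ge(\omega\alpha-1)/(\omega+1)$. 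Taking the ceiling of the right-hand side gives the claimed descending chain $s_{\left\lceil(\omega\alpha-1)/(\omega+1)\right\rceil}\ge\cdots\ge s_{\alpha-1}\ge s_{\alpha}$.

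The only non-routine ingredient is the use of the Lov\'asz perfect graph theorem to produce the $\alpha$-clique partition of $V(G)$; everything else is an elementary incidence count. The main obstacle to watch is verifying that the clique bound $\omega(\alpha-k)$ is tight enough to yield $s_{k+1}\le s_{k}$ in the precise range stated, so I would sanity-check the threshold $k=\left\lceil(\omega\alpha-1)/(\omega+1)\right\rceil$ to confirm the ceiling is placed correctly, and verify that the argument genuinely uses perfection of $G$ through the clique partition rather than any weaker structural hypothesis.
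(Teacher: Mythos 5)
Your argument is correct: the Lov\'asz Perfect Graph Theorem gives $\theta(G)=\chi(\overline{G})=\omega(\overline{G})=\alpha$, hence a partition of $V(G)$ into $\alpha$ cliques of size at most $\omega$, and the double count $(k+1)s_{k+1}\le\omega(\alpha-k)s_k$ together with the threshold computation $k\ge(\omega\alpha-1)/(\omega+1)$ is carried out correctly. The paper itself gives no proof of Proposition \ref{prop1} --- it only cites \cite{LevMan2007b} --- and your proposal reconstructs essentially the standard clique-cover counting argument of that reference, so there is nothing to fix.
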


The Strong Perfect Graph Theorem, due to\emph{\ }Chudnovsky \textit{et al}.,
\cite{ChRoSeTh03}, \cite{ChRoSeTh04}, asserts that a graph $G$ is perfect if
and only if it contains no odd hole (i.e., $C_{2n+1},n\geq2$) and no odd
antihole (i.e., $\overline{C_{2n+1}},n\geq2$) as an induced subgraph.

\begin{proposition}
If $G$ is a perfect graph of order $n$, then the coefficients $\left(
s_{i}\right)  $ of the polynomial $I\left(  G\circ\left(  K_{p}\cup
K_{q}\right)  ;x\right)  $ satisfy the following:
\[
s_{\left\lceil (2n\omega-1)/\left(  \omega+1\right)  \right\rceil }%
\geq.....\geq s_{2n-1}\geq s_{2n}\ \text{and }s_{0}\leq.....\leq s_{t-1}\leq
s_{t},
\]%
\[
\ \text{where }t=2n-\left\lceil \left(  2n\omega-1\right)  /\left(
\omega+1\right)  \right\rceil \,\omega=\max\left\{  \omega\left(  G\right)
,p,q\right\}  .
\]

\end{proposition}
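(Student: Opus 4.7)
The plan is to apply Proposition~\ref{prop1} to the perfect graph $G\circ H$ with $H=K_p\cup K_q$ and then use the $f$-symmetry supplied by Theorem~\ref{th3}(i) to transfer the decreasing tail of the coefficient sequence into an increasing head at the opposite end. By Theorem~\ref{th1}(i),
$$I(H;x)=(1+px)(1+qx)=1+(p+q)x+pq\,x^{2},$$
so $\alpha(H)=2$ and, in the notation of the proof of Theorem~\ref{th3}, the quantity $m=\binom{p+q}{2}-\binom{p}{2}-\binom{q}{2}$ equals $pq$. Theorem~\ref{th3}(i) therefore yields $\alpha(G\circ H)=2n$ together with
$$s_{2n-i}=(pq)^{n-i}\,s_{i},\qquad 0\le i\le n.$$

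Next I would verify that $G\circ H$ is perfect. Both $G$ and $H=K_{p}\cup K_{q}$ are perfect, so by the Strong Perfect Graph Theorem it suffices to exclude induced odd holes and odd antiholes of length at least five in $G\circ H$. The crucial structural fact is that each $v\in V(G)$ is completely joined to its copy $H_{v}$, while distinct copies $H_{u},H_{u'}$ are mutually non-adjacent and have no edges to $V(G)\setminus\{u,u'\}$. A short case analysis on whether a vertex $v\in V(G)$ belongs to such a putative induced cycle forces every vertex of it to lie either entirely inside a single $H_{v}$ (ruled out by perfection of $H$) or entirely inside $G$ (ruled out by perfection of $G$); otherwise the universal adjacency between $v$ and $H_{v}$ immediately produces a chord. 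The analogous argument in the complement $\overline{G\circ H}$ rules out odd antiholes.

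With perfection in hand, Proposition~\ref{prop1} applied to $G\circ H$ produces
$$s_{k}\ge s_{k+1}\ge\cdots\ge s_{2n},\qquad k=\left\lceil\frac{2n\omega-1}{\omega+1}\right\rceil,$$
which is exactly the decreasing-tail half of the claim. For the increasing head I invoke the $f$-symmetry identity: whenever $0\le i\le 2n-k-1$ the inequality $s_{2n-i-1}\ge s_{2n-i}$ belongs to this tail, and substituting $s_{2n-i-1}=(pq)^{n-i-1}s_{i+1}$ and $s_{2n-i}=(pq)^{n-i}s_{i}$ transforms it into
$$s_{i+1}\ge pq\cdot s_{i}\ge s_{i}$$
(using $p,q\ge 1$). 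Setting $t:=2n-k$, this reads $s_{0}\le s_{1}\le\cdots\le s_{t}$, completing the proof. I expect the main obstacle to be the perfection check, in particular ruling out odd antiholes in $\overline{G\circ H}$; once that is dispatched, the $f$-symmetry transfer is essentially bookkeeping.
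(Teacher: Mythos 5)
Your proposal matches the paper's proof essentially step for step: compute $I(K_{p}\cup K_{q};x)=1+(p+q)x+pq\,x^{2}$ so that $m=pq$, invoke the Strong Perfect Graph Theorem to establish perfection of $G\circ\left(K_{p}\cup K_{q}\right)$, apply Proposition~\ref{prop1} for the decreasing tail, and then use the $f$-symmetry $s_{2n-i}=(pq)^{n-i}s_{i}$ to flip that tail into the increasing head $s_{0}\leq\cdots\leq s_{t}$. The paper simply asserts the absence of odd holes and odd antiholes without the case analysis you sketch, so if anything you are more explicit on the perfection step, and your index bookkeeping agrees with the paper's.
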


\begin{proof}
It is easy to see that
\[
\alpha\left(  G\circ\left(  K_{p}\cup K_{q}\right)  \right)  =2n,\quad
\omega=\omega\left(  G\circ\left(  K_{p}\cup K_{q}\right)  \right)
=\max\left\{  \omega\left(  G\right)  ,p,q\right\}  .
\]
Since $G\circ\left(  K_{p}\cup K_{q}\right)  $ has no odd hole and no odd
antihole as an induced subgraph, Strong Perfect Graph Theorem assures that
$G\circ\left(  K_{p}\cup K_{q}\right)  $ is perfect. According to Proposition
\ref{prop1}, it follows that
\[
s_{\left\lceil (2n\omega-1)/\left(  \omega+1\right)  \right\rceil }%
\geq.....\geq s_{2n-1}\geq s_{2n}.
\]
By Theorem \ref{th3}, we have that
\[
s_{2n-i}=\left(  pq\right)  ^{n-i}\cdot s_{i},0\leq i\leq n.
\]
Since $s_{2n-i}\geq s_{2n-i+1}$, for $2n-i\geq$ $\left\lceil \left(
2n\omega-1\right)  /\left(  \omega+1\right)  \right\rceil $, we obtain
\[
\left(  pq\right)  ^{n-i}\cdot s_{i}\geq\left(  pq\right)  ^{n-i+1}\cdot
s_{i-1}\Longleftrightarrow s_{i}\geq pq\cdot s_{i-1}\Longrightarrow
s_{i-1}\leq s_{i},
\]
for $1\leq i\leq2n-\left\lceil \left(  2n\omega-1\right)  /\left(
\omega+1\right)  \right\rceil $, as claimed.
\end{proof}

It is easy to see that the sum of two symmetric and unimodal polynomials is
not necessarily symmetric and/or unimodal.

\begin{lemma}
\label{lem1}Let $p(x)$ and $q(x)$ be polynomials of degree $r$ and $r-1$
respectively, for some $r\geq2$, and let $p(0)\neq0$ and $q(0)=0$. If $p(x)$
and $q(x)$ are symmetric and unimodal, then so is $p(x)+q(x)$. Moreover, if
the mode of $p\left(  x\right)  $ or $q(x)$ is unique, then the mode of
$p(x)+q(x)$ is unique as well.
\end{lemma}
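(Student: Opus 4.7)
The plan is to work componentwise with the coefficient sequences and verify that symmetry and monotonicity are preserved under addition.

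First I would expand $p(x)=\sum_{i=0}^{r}a_{i}x^{i}$ and, padding $q$ to the same length, write $q(x)=\sum_{i=0}^{r}b_{i}x^{i}$ with $b_{r}=0$. From $p(0)\neq 0$ together with the symmetry of $p$ I get $a_{0}=a_{r}\neq 0$, so $\deg(p+q)=r$. From $q(0)=0$ together with the symmetry of $q$ I get $b_{0}=b_{r}=0$, so the symmetry must be read as $b_{i}=b_{r-i}$ for $0\leq i\leq r$ (equivalently, $q(x)=x\,\widetilde{q}(x)$ with $\widetilde{q}$ symmetric of degree $r-2$). Consequently $(a_{i}+b_{i})=(a_{r-i}+b_{r-i})$ for all $i$, which gives the symmetry of $p+q$ at once.

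Next I would verify unimodality. Because $p$ is symmetric and unimodal of degree $r$, the sequence $(a_{i})$ is nondecreasing on $\{0,1,\ldots,\lfloor r/2\rfloor\}$, and analogously for $(b_{i})$. The termwise sum of two nondecreasing sequences is nondecreasing, so $(a_{i}+b_{i})$ is nondecreasing on $\{0,\ldots,\lfloor r/2\rfloor\}$; by the symmetry established in the previous step, it is nonincreasing on $\{\lceil r/2\rceil,\ldots,r\}$. This is exactly unimodality of $p+q$.

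For the uniqueness clause, assume without loss of generality that $p$ has a unique mode $k$. Since $p$ is symmetric, equality $a_{\lfloor r/2\rfloor}=a_{\lceil r/2\rceil}$ would force two distinct modes whenever $r$ is odd; hence $r$ must be even and $k=r/2$, with strict inequalities $a_{k-1}<a_{k}$ and $a_{k+1}<a_{k}$. Combining these with the inequalities $b_{k-1}\leq b_{k}$ and $b_{k+1}\leq b_{k}$ coming from unimodality of $q$, one gets $a_{k-1}+b_{k-1}<a_{k}+b_{k}$ and $a_{k+1}+b_{k+1}<a_{k}+b_{k}$, so $k=r/2$ is the unique mode of $p+q$. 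The same argument, with the roles of $p$ and $q$ swapped, handles the case where $q$ has the unique mode.

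The main obstacle here is not computational but interpretive: a polynomial of degree $r-1$ with vanishing constant term would normally also have vanishing leading coefficient if it were symmetric in the usual sense, so one has to recognize that the symmetry of $q$ is meant relative to the degree $r$ of $p$ (i.e., with a zero top coefficient appended). Once this convention is identified, the argument reduces to the elementary observation that the termwise sum of two nondecreasing sequences is nondecreasing.
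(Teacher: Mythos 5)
Your proof is correct and follows essentially the same coefficientwise strategy as the paper: align the two coefficient sequences, read the symmetry of $q$ about $r/2$ (i.e., $b_{i}=b_{r-i}$ with $b_{0}=b_{r}=0$), and add termwise. The only real difference is that the paper outsources the symmetry and unimodality of $p(x)+q(x)$ to a citation (Bahls and Salazar) and argues directly only the uniqueness of the mode, whereas you prove all three claims from scratch; the key fact you use --- that a symmetric unimodal sequence is nondecreasing up to its centre, so a unique mode forces $r$ even and the mode to sit at $r/2$ --- is exactly what makes the paper's own uniqueness computation ($a_{s-1}+b_{s-1}<a_{s}+b_{s}$ with $r=2s$) go through, and your version has the small advantage of justifying why one may assume $r=2s$ rather than positing it. Both arguments implicitly use that the coefficients are nonnegative (so that padding $q$ with $b_{0}=b_{r}=0$ keeps the sequence unimodal); this is harmless in the intended application to independence polynomials.
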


\begin{proof}
The symmetry and unimodality of $p(x)+q(x)$ were proved in \cite{BS2010}.

Assume that the mode of $p(x)=a_{0}+...+a_{s-1}x^{s-1}+a_{s}x^{s}%
+a_{s-1}x^{s-1}+...+a_{0}x^{r}$ is equal to $s$ and is unique, i.e.,
$a_{0}\leq a_{1}\leq...\leq a_{s-1}<a_{s}>a_{s-1}\geq...\geq a_{0}$, where
$r=2s$. The polynomial $q(x)$ is symmetric, unimodal, $\deg q=r-1$, and
$q(0)=0$, that is $q\left(  x\right)  =b_{1}x+...+b_{1}x^{r-1}$ satisfies
$b_{1}\leq...\leq b_{s-1}\leq b_{s}\geq b_{s-1}\geq...\geq b_{1}$. Then
$p(x)+q(x)=a_{0}+\left(  a_{1}+b_{1}\right)  x+...+\left(  a_{r-1}%
+b_{r-1}\right)  x^{r-1}+a_{0}x^{r}$ is symmetric, unimodal, and
$a_{s-1}+b_{s-1}<a_{s}+b_{s}$, i.e., its mode is equal to $s$ and it is unique.

Similarly, one can show that the mode of $p(x)+q(x)$ is unique, whenever the
mode of $q(x)$ is unique.
\end{proof}

\begin{lemma}
\label{lem2}If $a>1$, and $P=1+s_{1}x+...+s_{n-1}x^{n-1}+s_{n}x^{n}%
+s_{n-1}x^{n+1}+...+x^{2n}$ is symmetric, unimodal with a unique mode, then
$Q=\left(  1+ax+x^{2}\right)  \cdot P$ is symmetric and unimodal with a unique
mode, equal to $n+1$.
\end{lemma}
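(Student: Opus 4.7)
The plan is to obtain symmetry and unimodality of $Q$ essentially for free from Theorem \ref{th4}, and then verify uniqueness of the mode by a direct computation at the middle coefficient.

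First, I would note that $1 + ax + x^2$ is symmetric and, since $a > 1$, unimodal with its unique maximum at the coefficient of $x^1$. Together with the hypothesis that $P$ is symmetric and unimodal, Theorem \ref{th4} immediately yields that $Q = (1 + ax + x^2) \cdot P$ is symmetric and unimodal. Since $\deg Q = 2n + 2$ and $Q$ is symmetric, the uniqueness of the mode of $P$ forces that mode to sit at $n$, and any mode of $Q$ must sit at $n+1$; so everything reduces to establishing the strict inequality $q_n < q_{n+1}$ for the coefficients $q_k$ of $Q$.

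Next, I would expand to get $q_k = s_k + a\, s_{k-1} + s_{k-2}$ (with $s_j = 0$ outside $0 \le j \le 2n$). Using the symmetry of $P$, in particular $s_{n+1} = s_{n-1}$, one finds $q_n = s_n + a\, s_{n-1} + s_{n-2}$ and $q_{n+1} = 2 s_{n-1} + a\, s_n$, and these rearrange into the clean factorization
\[
q_{n+1} - q_n = (a - 1)(s_n - s_{n-1}) + (s_{n-1} - s_{n-2}).
\]
The assumption $a > 1$ together with the uniqueness of the mode of $P$ at $n$ forces the first summand to be strictly positive, while the unimodality of $P$ makes the second one nonnegative; hence $q_{n+1} > q_n$, and by the symmetry of $Q$ also $q_{n+1} > q_{n+2}$, so $n+1$ is the unique mode of $Q$. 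The only edge case is $n = 1$, where $s_{n-2} = 0$, and the same factorization still gives $q_2 - q_1 = (a-1)(s_1 - 1) + 1 > 0$. The main obstacle is really just spotting that the middle difference telescopes in this convenient way; beyond invoking Theorem \ref{th4}, no further machinery is needed.
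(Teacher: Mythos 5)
Your proposal is correct and follows essentially the same route as the paper: symmetry and unimodality of $Q$ via Theorem \ref{th4}, then the identity $q_{n+1}-q_{n}=q_{n+1}-q_{n+2}=\left(  a-1\right)  \left(  s_{n}-s_{n-1}\right)  +\left(  s_{n-1}-s_{n-2}\right)  >0$ to pin down the unique mode at $n+1$. The only difference is cosmetic: you additionally spell out the $n=1$ edge case, which the paper leaves implicit.
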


\begin{proof}
The symmetry of $Q$ follows from Theorem \ref{th4}. The coefficients of
$x^{n}$, $x^{n+1}$, $x^{n+2}$ in $Q$ are respectively, $t_{n}=t_{n+2}%
=s_{n}+as_{n-1}+s_{n-2}$ and $t_{n+1}=2s_{n-1}+as_{n}$. Hence we obtain that
\[
t_{n+1}-t_{n}=t_{n+1}-t_{n+2}=\left(  s_{n-1}-s_{n-2}\right)  +\left(
a-1\right)  \left(  s_{n}-s_{n-1}\right)  >0,
\]
which implies that the mode of $Q$ is equal to $n+1$ and it is unique.
\end{proof}

\begin{theorem}
\label{Th}If $H=K_{r}-e,r\geq2$, then the polynomial $I(G\circ H;x)$ is
unimodal and symmetric for every graph $G$. Moreover, the mode of $I(G\circ
H;x)$ is unique and equal to the order of $G$.
\end{theorem}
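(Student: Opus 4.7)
The symmetry assertion is immediate from Theorem \ref{th3}(ii): for $H=K_r-e$ we have $\alpha(H)=2$ and $\binom{r}{2}-|E(H)|=1$, so the base of the exponent in $f$ equals $1$ and $f\equiv 1$. Hence I would focus entirely on unimodality and on uniqueness of the mode at position $n=|V(G)|$.

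My plan is to use Gutman's product formula (Theorem \ref{th2}) together with the identity $I(H;x)=1+rx+x^2$ to write
\[
I(G\circ H;x)=\sum_{k=0}^{\alpha(G)} s_k(G)\, x^k (1+rx+x^2)^{n-k},
\]
and then to analyse each summand $T_k(x):=s_k(G)\,x^k(1+rx+x^2)^{n-k}$ separately. The crucial preliminary is the claim that, for every $m\ge 0$, the polynomial $(1+rx+x^2)^m$ is palindromic and unimodal with a \emph{unique} mode at position $m$. I would establish this by induction on $m$: the base case $m=0$ is trivial, and the inductive step is a direct application of Lemma \ref{lem2} with $a=r\ge 2>1$, which is exactly the hypothesis that lemma is designed to exploit.

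Granted the preceding, $T_k$ is a shift by $x^k$ of a palindromic unimodal polynomial centred at $n-k$, hence is palindromic about position $n$ and unimodal with peak at $n$; the peak is strict whenever $s_k(G)>0$. Summing polynomials that are palindromic about a common centre $n$ and weakly unimodal with peak at $n$ preserves both properties; and the strict inequality $[x^{n-1}]T_0<[x^n]T_0$ contributed by $T_0=(1+rx+x^2)^n$ (note $s_0(G)=1$) survives in the sum, yielding $[x^{n-1}]I(G\circ H;x)<[x^n]I(G\circ H;x)$, exactly the claimed unique mode at $n$. The only genuine obstacle in this programme is establishing the strict peak for $(1+rx+x^2)^n$, and Lemma \ref{lem2} together with the assumption $r\ge 2$ is tailored precisely to dispose of it; everything else amounts to routine bookkeeping on shifted palindromic unimodal summands.
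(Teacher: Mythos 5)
Your argument is correct, and it takes a genuinely different route from the paper. The paper proves unimodality and uniqueness of the mode by induction on $n=|V(G)|$, using the vertex--deletion recurrence $I(G\circ H;x)=I(G\circ H-v;x)+x\cdot I(G\circ H-N[v];x)$, identifying the two pieces as $I(H;x)\cdot I((G-v)\circ H;x)$ and $x\,(I(H;x))^{k}\cdot I((G-N[v])\circ H;x)$, and then combining Lemma \ref{lem2} (to multiply by copies of $1+(r)x+x^{2}$ while keeping a strict peak) with Lemma \ref{lem1} (to add a symmetric unimodal polynomial of degree $2n$ with one of degree $2n-1$). You instead expand Gutman's formula once and for all into $\sum_{k}s_{k}(G)\,x^{k}(1+rx+x^{2})^{n-k}$, observe that every summand is palindromic about the common centre $n$ and weakly unimodal with peak at $n$, and extract strictness at $n$ from the $k=0$ term alone via Lemma \ref{lem2}. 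Your route dispenses with the graph induction and with Lemma \ref{lem1} entirely (the only induction left is on the exponent $m$ of $(1+rx+x^{2})^{m}$), and it makes transparent why the mode sits exactly at $n$; the paper's route stays closer to the standard independence-polynomial recurrences and reuses lemmas already in the literature. Both arguments ultimately rest on the same two pillars: Theorem \ref{th3}(ii) for symmetry, and Lemma \ref{lem2} with $a=r\ge 2>1$ for the strict peak. One small bookkeeping point worth spelling out if you write this up in full: the coefficient sequence of each shifted summand $x^{k}(1+rx+x^{2})^{n-k}$, viewed on the full index range $0,\dots,2n$, begins with zeros, so you should note that the jump from $0$ to $s_{k}(G)$ at index $k\le n$ is itself a non-decrease, which is what makes the termwise summation of weakly unimodal sequences with common peak legitimate.
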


\begin{proof}
The polynomial $I(G\circ H;x)$ is symmetric, according to Theorem
\ref{th3}\emph{(ii)}.

We show, by induction on the order $n=\left\vert V\left(  G\right)
\right\vert $ of $G$, that $I(G\circ H;x)$ is unimodal and its mode is unique
and equal to $n$.

If $n=1$, then $G=K_{1}$ and $I(G\circ H;x)=I(K_{r+1}-e;x)=1+\left(
r+1\right)  x+x^{2}$, which is clearly unimodal and the mode is unique and
equal to $n$.

If $n=2$, then either $G=2K_{1}$ and
\begin{gather*}
I(G\circ H;x)=I(2K_{1}\circ H;x)=\left(  1+\left(  r+1\right)  x+x^{2}\right)
^{2}=\\
=\allowbreak\allowbreak1+\left(  2r+2\right)  x+\left(  r^{2}+2r+3\right)
x^{2}+\left(  2r+2\right)  x^{3}+x^{4},
\end{gather*}
or $G=K_{2}=(\{v_{1},v_{2}\},\{v_{1}v_{2}\})$ and
\begin{gather*}
I(G\circ H;x)=I(K_{2}\circ H-v_{1};x)+x\cdot I(K_{2}\circ H-N[v_{1}];x)=\\
=I(H;x)\cdot I(K_{r+1};x)+x\cdot I(H;x)=\left(  1+\left(  r+1\right)
x+x^{2}\right)  \cdot\left(  1+\left(  r+2\right)  x+x^{2}\right)  =\\
=1+\allowbreak\left(  2r+3\right)  x+\left(  r^{2}+3r+4\right)  x^{2}+\left(
2r+3\right)  x^{3}+x^{4}.
\end{gather*}
In both cases, $I(G\circ H;x)$ is clearly unimodal and the mode is unique and
equals $n$.

Let $G$ be a graph of order $n\geq3$.

Clearly, if $E=\emptyset$, then $I(G\circ H;x)=\left(  1+\left(  r+1\right)
x+x^{2}\right)  ^{n}$, which is unimodal, according to Theorem \ref{th4}, and
its mode is unique and equal to $n$, by Lemma \ref{lem2}.

Suppose that $E\neq\emptyset$, and let $v\in V$ be with $N_{G}\left(
v\right)  =\left\{  u_{i}:1\leq i\leq k\right\}  $.

Applying Theorem \ref{th3}, we obtain
\[
I(G\circ H;x)=I(G\circ H-v;x)+x\bullet I(G\circ H-N_{G\circ H}\left[
v\right]  ;x)=p\left(  x\right)  +q\left(  x\right)  ,
\]
where
\[
p\left(  x\right)  =I(G\circ H-v;x)\text{ and }q\left(  x\right)  =x\bullet
I(G\circ H-N_{G\circ H}\left[  v\right]  ;x).
\]

\textit{Claim1}. $p\left(  x\right)  $ is symmetric, unimodal with a unique
mode, equal to $n$.

First, $p\left(  x\right)  =I(H;x)\bullet I\left(  \left(  G-v\right)  \circ
H;x\right)  $, because $G\circ H-v$ is the disjoint union of $H$ and $\left(
G-v\right)  \circ H$.

The graph $G\circ H$ has a unique maximum independent set, namely the set $S$
containing the non-adjacent vertices from each $H$. Hence, we get that
$\alpha\left(  G\circ H\right)  =\alpha\left(  G\circ H-v\right)  =2n$, since
$S\cap V\left(  G\right)  =\emptyset$, and $v\in V\left(  G\right)  $.
According to Theorem \ref{th3}\emph{(ii)}, the polynomial $I\left(  \left(
G-v\right)  \circ H;x\right)  $ is symmetric, and by induction hypothesis,
$I\left(  \left(  G-v\right)  \circ H;x\right)  $ is also unimodal and its
mode is unique and equal to $n-1$. According to Lemma \ref{lem2}, $p\left(
x\right)  $ is symmetric and unimodal with a unique mode, equal to $n$.

\textit{Claim 2}. $q\left(  x\right)  $ is symmetric and unimodal.

Since $G\circ H-N_{G\circ H}\left[  v\right]  $ consists of the disjoint union
of $kH$ and $\left(  G-N_{G}\left[  v\right]  \right)  \circ H$, we obtain
that $q\left(  x\right)  =x\bullet\left(  I(H;x)\right)  ^{k}\bullet I\left(
\left(  G-N_{G}\left[  v\right]  \right)  \circ H;x\right)  $.

Further, one can see that $\alpha\left(  G\circ H-N_{G\circ H}\left[
v\right]  \right)  =\left\vert S-\left\{  a_{1},a_{2}\right\}  \right\vert
=\alpha\left(  G\circ H\right)  -2$, where $\left\{  a_{1},a_{2}\right\}
=S\cap N_{G\circ H}\left(  v\right)  $. The symmetry of $I\left(  \left(
G-N_{G}\left[  v\right]  \right)  \circ H;x\right)  $ follows from Theorem
\ref{th3}\emph{(ii)}. By induction hypothesis, $I\left(  \left(
G-N_{G}\left[  v\right]  \right)  \circ H;x\right)  $ is unimodal with a
unique mode. Lemma \ref{lem2} ensures that $\left(  I(H;x)\right)  ^{k}\bullet
I\left(  \left(  G-N_{G}\left[  v\right]  \right)  \circ H;x\right)  $ is
symmetric and unimodal, with a unique mode.

\textit{Claim 3}. $I(G\circ H;x)$ is symmetric, unimodal and its mode is
unique and equals $n$.

Since $\deg p=\deg q+1\geq2$, and $p\left(  0\right)  =1$, while $q\left(
0\right)  =0$, we finally obtain that $I(G\circ H;x)=p\left(  x\right)
+q\left(  x\right)  $ is symmetric and unimodal with a unique mode, according
to Lemma \ref{lem1}.
\end{proof}

Since $K_{2}-e=2K_{1}$, we obtain the following.

\begin{corollary}
\cite{Mandrescu2012} The polynomial $I\left(  G\circ2K_{1};x\right)  $ is
unimodal for every graph G.
\end{corollary}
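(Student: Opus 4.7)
The symmetry of $I(G \circ H;x)$ is immediate from Theorem \ref{th3}(ii) since for $H = K_r - e$ one has $\binom{r}{2} - |E(H)| = 1$. What remains is the unimodality together with the location of the unique mode at $n = |V(G)|$, and I would prove both by induction on $n$. The base cases $n = 1$ and $n = 2$ are handled by direct computation: $K_1 \circ (K_r - e) = K_{r+1} - e$ has independence polynomial $1 + (r+1)x + x^2$, whose unique mode sits at position $1$; and for $n = 2$ each of $G = 2K_1$ and $G = K_2$ yields a palindromic polynomial of degree $4$ whose middle coefficient strictly dominates the others, placing the unique mode at $2$.

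For the inductive step ($n \geq 3$), I would pick any vertex $v \in V(G)$, set $k = |N_G(v)|$, and apply the vertex-deletion recurrence of Theorem \ref{th1}(iii) at $v$ inside $G \circ H$. By the corona structure, $(G \circ H) - v$ is the disjoint union of the copy of $H$ attached to $v$ with $(G - v) \circ H$, while $(G \circ H) - N_{G \circ H}[v]$ is the disjoint union of $k$ isolated copies of $H$ with $(G - N_G[v]) \circ H$. This yields $I(G \circ H;x) = p(x) + q(x)$ where
\[
p(x) = I(H;x)\cdot I((G-v)\circ H;x), \qquad q(x) = x\cdot I(H;x)^{k}\cdot I((G-N_G[v])\circ H;x).
\]
Since $I(H;x) = 1 + rx + x^2$ with $r \geq 2 > 1$, the induction hypothesis applied to $G - v$ (of order $n-1$) combined with Lemma \ref{lem2} gives that $p$ is palindromic and unimodal of degree $2n$ with unique mode at $n$. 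Applying the induction hypothesis to $G - N_G[v]$ (of order $n-1-k$, with the convention $I(\emptyset;x) = 1$ when the subgraph is empty) and then $k$ successive applications of Lemma \ref{lem2} show that $R(x) := I(H;x)^{k}\cdot I((G-N_G[v])\circ H;x)$ is palindromic and unimodal of degree $2n-2$ with unique mode at $n-1$.

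The final step is to combine $p$ and $q$ via Lemma \ref{lem1}: one has $\deg p = 2n = \deg q + 1 \geq 2$, $p(0) = 1 \neq 0$, $q(0) = 0$, and, writing $q = x\cdot R$ with $R$ palindromic and unimodal of degree $2n-2$, the polynomial $q$ fits exactly the notion of a symmetric, unimodal, degree-$(2n-1)$ polynomial with zero constant term invoked by Lemma \ref{lem1}, with unique mode at position $n$. The lemma then delivers that $I(G \circ H;x) = p + q$ is symmetric and unimodal with a unique mode equal to $n$. The chief technical obstacle is bookkeeping around $q$: although $x\cdot R$ is not palindromic in the standard constant-versus-leading sense, it fits the weaker "symmetric with zero constant term" version used by Lemma \ref{lem1}, which amounts exactly to the palindromicity of $R$, so that the mode correctly shifts from $n-1$ (in $R$) to $n$ (in $q$). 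A smaller point is to cover the degenerate case $N_G[v] = V(G)$, where the corona factor in $R$ collapses to the constant polynomial $1$ and the computation reduces to iterated applications of Lemma \ref{lem2} starting from $I(H;x)^{n-1}$.
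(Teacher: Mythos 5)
Your argument is correct and coincides with the paper's own route: the paper obtains this corollary as the special case $H=K_{2}-e=2K_{1}$ of Theorem \ref{Th}, whose proof is exactly the induction you describe (vertex deletion at $v$, the decomposition into $p(x)+q(x)$, repeated use of Lemma \ref{lem2} for the factors $I(H;x)$, and Lemma \ref{lem1} to recombine, with base cases $n=1,2$ computed directly). Your added care about the degenerate case $N_{G}[v]=V(G)$ and about $q=x\cdot R$ fitting the zero-constant-term form of symmetry required by Lemma \ref{lem1} only makes explicit what the paper leaves implicit.
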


\section{Conclusions\qquad}

In this paper we started investigating higher symmetries of polynomials with
emphasis on independence polynomials of graphs. This new paradigm already
showed its usefulness in revealing a new family of graphs with symmetric
independence polynomials. We conclude with the following.

\begin{conjecture}
$I\left(  G\circ H;x\right)  $ is symmetric for every graph $G$ if and only if
$H=K_{r}-e$ for some $r\geq2$.
\end{conjecture}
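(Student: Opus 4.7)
The ``if'' direction is already given by Theorem~\ref{th3}(ii): when $H = K_r - e$ we have $\alpha(H) = 2$ and $\binom{p}{2} - q = 1$, so $f(i) \equiv 1$ and $I(G\circ H;x)$ is palindromic for every $G$. The plan for the ``only if'' direction is to assume that $I(G\circ H;x)$ is symmetric for every graph $G$ and deduce $\alpha(H) = 2$; Theorem~\ref{th3}(ii) then supplies $H = K_r - e$. Throughout, I write $p = |V(H)|$, $\alpha = \alpha(H)$, and use that $\alpha(G\circ H) = n\alpha$ where $n = |V(G)|$, since each copy of $H$ contributes independently.

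The first step would specialize to $G = K_1$, for which $I(K_1\circ H;x) = x + I(H;x)$ is a palindrome of degree $\alpha$. Matching constant term to leading coefficient forces $s_\alpha(H) = 1$, so $H$ has a \emph{unique} maximum independent set $M$. The case $\alpha = 1$ is immediately eliminated (the polynomial is $1 + (p+1)x$ with $p\ge 1$), and in the case $\alpha = 2$ the same comparison already gives $s_2(H) = 1$, i.e.\ $H = K_p - e$.

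The heart of the argument is to rule out $\alpha \geq 3$. Here I would exploit the following rigidity: uniqueness of $M$ combined with $\alpha \geq 3$ forces every independent set of $G\circ H$ of size $\geq n\alpha - 1$ to avoid $V(G)$ entirely, because taking any $v \in V(G)$ kills the entire block $H_v$ and costs at least $\alpha - 1 \geq 2$. A direct count then yields the closed forms
\[
s_{n\alpha - 1}(G\circ H) = n\, s_{\alpha - 1}(H), \qquad s_{n\alpha - 2}(G\circ H) = n\, s_{\alpha - 2}(H) + \binom{n}{2}\bigl(s_{\alpha - 1}(H)\bigr)^2 + \varepsilon\, n,
\]
where $\varepsilon = 1$ if $\alpha = 3$ and $\varepsilon = 0$ if $\alpha \geq 4$; in both cases the right-hand sides depend only on $H$ and on $n$, not on the internal structure of $G$. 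Symmetry of $I(G\circ H;x)$ then equates these to the elementary counts
\[
s_1(G\circ H) = n(p+1), \qquad s_2(G\circ H) = s_2(G) + n\, s_2(H) + n(n-1)p.
\]
The first equality forces $s_{\alpha - 1}(H) = p + 1$, and substituting into the second expresses $s_2(G)$ as a function of $n$, $p$ and the coefficients of $I(H;x)$ alone. But for every $n \geq 2$ the value $s_2(G)$ genuinely varies with $G$ (e.g.\ $s_2(K_n) = 0 \neq \binom{n}{2} = s_2(nK_1)$), yielding a contradiction. Hence $\alpha(H) \leq 2$, and the first step gives $H = K_r - e$.

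The main obstacle I anticipate is the bookkeeping for $\alpha = 3$, where ``swap'' independent sets of size $n\alpha - 2$ do contain one vertex of $V(G)$ (gained at the cost of sacrificing one copy of $M$) and hence genuinely exist. One must verify that the number of such sets is exactly $n$, independent of the edges of $G$, so that the $s_2(G)$-variation argument still produces the contradiction.
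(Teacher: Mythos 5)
The statement you were asked to prove is left as an open \emph{conjecture} in the paper: the authors prove only the ``if'' direction (via Theorem \ref{th3}(ii)) and offer no argument for the converse, so there is no proof of the paper's to compare against. Your sketch therefore goes beyond the paper, and having checked it, I believe it is essentially correct. The specialization $G=K_{1}$ gives $I(K_{1}\circ H;x)=x+I(H;x)$, which rules out $\alpha(H)=1$, forces $s_{\alpha}(H)=1$, and settles $\alpha(H)=2$ outright; for $\alpha\geq3$ your inequality $|S|\leq n\alpha-t(\alpha-1)$ (where $t=|S\cap V(G)|$) correctly yields $t=0$ for sizes $n\alpha-1$ and, when $\alpha\geq4$, for size $n\alpha-2$ as well, and your count of the exceptional $t=1$ sets when $\alpha=3$ (exactly $n$ of them, one per vertex of $G$, each completed by the unique maximum independent set in every other copy) is right and, crucially, independent of $E(G)$. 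The one slip is your formula for $s_{2}(G\circ H)$: you omit the $\binom{n}{2}p^{2}$ pairs consisting of one vertex from each of two distinct copies of $H$. This is harmless, since that term depends only on $n$ and $p$, so the identity $s_{2}(G\circ H)=s_{n\alpha-2}(G\circ H)$ still pins $s_{2}(G)$ to a quantity determined by $n$ and $H$ alone, and comparing $G=K_{n}$ with $G=nK_{1}$ for any $n\geq2$ gives the contradiction. Two small points worth tightening in a write-up: the exclusion of $V(G)$ from large independent sets needs only $\alpha\geq3$, not uniqueness of the maximum independent set (uniqueness merely normalizes the counts via $s_{\alpha}(H)=1$); and you should note explicitly that for $n\geq2$ and $\alpha\geq3$ one has $n\alpha-2>2$, so the symmetry relation being invoked is a genuine, non-trivial pairing of coefficients.
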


\begin{problem}
Describe the set of invariants of a given rational function $\frac{a\left(
x\right)  }{b\left(  x\right)  }$.
\end{problem}


\begin{thebibliography}{99}                                                                                               %


\bibitem {AlMaScEr87}Y. Alavi, P. J. Malde, A. J. Schwenk, P. Erd\"{o}s,
\emph{The vertex independence sequence of a graph is not constrained},
Congressus Numerantium \textbf{58} (1987) 15-23.

\bibitem {Andrews}G. E. Andrews, \emph{The Theory of Partitions},
Addison-Wesley, Reading, 1976.

\bibitem {BS2010}P. Bahls, N. Salazar, \emph{Symmetry and unimodality of
independence polynomials of path-like graphs}, The Australasian Journal of
Combinatorics \textbf{47} (2010) 165-176.

\bibitem {Berge0}C. Berge, \emph{F\"{a}rbung von Graphen deren s\"{a}mtliche
bzw. deren ungerade Kreise starr sind (Zusammenfassung)}, Wiss.Z.
Martin-Luther-Univ. Halle \textbf{10} (1961) 114-115.

\bibitem {ChRoSeTh03}M. Chudnovsky, N. Robertson, P. D. Seymour and R. Thomas,
\emph{Progress on perfect graphs}, Mathematical Programming B \textbf{97}
(2003) 405-422.

\bibitem {ChRoSeTh04}M. Chudnovsky, N. Robertson, P. D. Seymour and R. Thomas,
\emph{The Strong Perfect Graph Theorem}, Annals of Mathematics \textbf{164}
(2006), 51-229.

\bibitem {GuHa83}I. Gutman, F. Harary, \emph{Generalizations of the matching
polynomial}, Utilitas Mathematica \textbf{24} (1983) 97-106.

\bibitem {Gu92c}I. Gutman, \emph{Independence vertex palindromic graphs},
Graph Theory Notes of New York Academy of Sciences \textbf{XXIII} (1992) 21-24.

\bibitem {Gu92d}I. Gutman, \emph{Independence vertex sets in some compound
graphs}, Publications de l'Institut Math\'{e}matique \textbf{52} (1992) 5-9.

\bibitem {Gu93}I. Gutman, \emph{A contribution to the study of palindromic
graphs}, Graph Theory Notes of New York Academy of Sciences \textbf{XXIV}
(1993) 51-56.

\bibitem {Kennedy}J. W. Kennedy, \emph{Palindromic graphs}, Graph Theory Notes
of New York Academy of Sciences \textbf{XXII} (1992) 27-32.

\bibitem {LeMa03b}V. E. Levit, E. Mandrescu, \emph{On unimodality of
independence polynomials of some well-covered trees}, DMTCS 2003 (C. S. Calude
\textit{et al}. eds.), LNCS \textbf{2731}, Springer-Verlag (2003) 237-256.

\bibitem {LeMa03c}V. E. Levit, E. Mandrescu, \emph{A family of well-covered
graphs with unimodal independence polynomials}, Congressus Numerantium
\textbf{165} (2003) 195-207.

\bibitem {LeMa04c}V. E. Levit, E. Mandrescu, \emph{Very well-covered graphs
with log-concave independence polynomials}, Carpathian Journal of Mathematics
\textbf{20} (2004) 73-80.

\bibitem {LevManGreece}V. E. Levit, E. Mandrescu, \emph{The independence
polynomial of a graph - a survey}, Proceedings of the 1$^{st}$ International
Conference on Algebraic Informatics, Aristotle University of Thessaloniki,
Greece, (2005) 233-254. http://web.auth.gr/cai05/papers/20.pdf

\bibitem {LeMa04b}V. E. Levit, E. Mandrescu, \emph{Independence polynomials of
well-covered graphs: Generic counterexamples for the unimodality conjecture},
European Journal of Combinatorics \textbf{27} (2006) 931-939.

\bibitem {LevMan2007b}V.E. Levit, E. Mandrescu, \emph{Independence polynomials
and the unimodality conjecture for very well-covered, quasi-regularizable, and
perfect graphs}, Graph Theory in Paris, Trends Math., Birkh\"{a}user, Basel
(2007) 243-254.

\bibitem {LevMan2007}V. E. Levit, E. Mandrescu, \emph{A family of graphs whose
independence polynomials are both palindromic and unimodal}, Carpathian
Journal of Mathematics \textbf{23} (2007) 108-116.

\bibitem {LevMan2008}V. E. Levit, E. Mandrescu, \emph{Graph operations and
partial unimodality of independence polynomials}, Congressus Numerantium
\textbf{190} (2008) 21-31.

\bibitem {LevMan2010}V. E. Levit, E. Mandrescu, \emph{On the independence
polynomial of an antiregular graph}, Carpathian Journal of Mathematics
\textbf{28} (2012) 263-272.

\bibitem {Mandrescu2012}E. Mandrescu, \emph{Unimodality of some independence
polynomials via their palindromicity}, Australasian Journal of Combinatorics
\textbf{53} (2012) 76-82.

\bibitem {St98}D. Stevanovi\'{c}, \emph{Graphs with palindromic independence
polynomial}, Graph Theory Notes of New York Academy of Sciences \textbf{XXXIV}
(1998) 31-36.

\bibitem {Wang}Yi Wang, Bao-Xuan Zhu, \emph{On the unimodality of independence
polynomials of some graphs}, European Journal of Combinatorics \textbf{32}
(2011) 10-20.

\bibitem {Zfu}Z.F. Zhu, \emph{The unimodality of independence polynomials of
some graphs}, Australasian Journal of Combinatorics \textbf{38} (2007) 27-33.
\end{thebibliography}
\end{document}